\newtheorem{prop}{Proposition}
\newcommand{\expect}[1]{ \mathbb{E} \left[ #1 \right] }
\newcommand{\cexpect}[2]{ \mathbb{E} \left[ #1 \: \big \vert #2 \: \right] }
\newcommand{\expectunder}[2]{ \mathbb{E}_{#1} \left[ #2 \right] }
\newcommand{\proba}[1]{ \mathbb{P} \left( #1 \right) }
\newcommand{\cproba}[2]{ \mathbb{P} \left( #1 \: \big \vert \: #2 \right) }
\newcommand{\vari}[1]{ \mathbb{V} \text{ar} \left( #1 \right) }
\newcommand{\covari}[2]{ \mathbb{C} \text{ov} \left( #1 , #2 \right) }
\newcommand{\correl}[2]{ \mathbb{C} \text{orr} \left( #1 , #2 \right) }
\newcommand{\setdef}[1]{ \left\{ #1 \right\} }
\newcommand{\indic}[1]{ \mathbbm{1}_{ \left\{ #1 \right\} } }
\newcommand{\absval}[1]{ \lvert #1 \rvert }
\DeclareMathOperator{\logit}{logit}
\newcommand{\confint}[2]{$95\%$ CI: $\left[ #1; #2 \right]$}
\title{A bivariate logistic regression model based on\\ latent variables}
\author{Simon Bang Kristensen\thanks{Correspondance should be sent to Simon Bang Kristensen, Department of Public Health, Bartholins All\'{e} 2, 8000 Aarhus C, DK-Denmark, \href{mailto:simonbk@ph.au.dk}{simonbk@ph.au.dk}.} \& Bo Martin Bibby \\
\normalsize Department of Public Health, Biostatistics, Aarhus University}
\date{
  \textit{Keywords: generalized linear mixed models;
    joint mixed models;
    correlated Bernoulli variables.} \\
  \today
}
\begin{document}

\maketitle

\begin{abstract}
\noindent
Bivariate observations of binary and ordinal data arise frequently and require a bivariate modelling approach in cases where one is interested in aspects of the marginal distributions as separate outcomes along with the association between the two. We consider methods for constructing such bivariate models with logistic marginals and propose a model based on the Ali-Mikhail-Haq bivariate logistic distribution. We motivate the model as an extension of that based on the Gumbel type 2 distribution as considered by other authors and as a bivariate extension of the logistic distribution which preserves certain natural characteristics. Basic properties of the obtained model are studied and the proposed methods are illustrated through analysis of two data sets, one describing the trekking habits of Norwegian hikers, the other stemming from a cognitive experiment of visual recognition and awareness.
\end{abstract}

\section{Introduction}
\label{sec:introduction}

In the present paper we consider the problem of modelling observations $(X,Y)$ where $X$ is binary and $Y$ is ordinal on the scale $\setdef{1, \ldots, K}$ for some $K \geq 2$. Such observations are collected for subjects $i=1, \ldots, N$ and in some cases longitudinally at time points $t=1, \ldots, T$. We denote a single instance as $(X_{it}, Y_{it})$ dropping the subscript $t$ when only one observation is collected for each individual.

Data of this type occur for example in clinical trials, epidemiology and cognitive science. We focus on scenarios where one is interested both in $X$ and $Y$ as outcomes in themselves but also in their association. For example in an efficacy-toxicity study, interest may centre on both describing the efficacy and toxicity of the drug but also in modelling for example the joint probability of benefiting from a positive treatment outcome ($X=1$) while not experiencing a high degree of side-effects ($Y \leq k$ for some $k$) as a function of dosage. Another example arises in the cognitive sciences where a participant in a trial must perform a task (for example, identify a geometric shape presented on a screen) and quantify her/his confidence in the assertion (say, on an ordinal rating scale). The trial is then repeated under various configurations, for example manipulating the task difficulty or stimulus intensity (e.g. the time that the image is shown). Often one is interested in the accuracy of the performance ($X$) and the quantifications of confidence ($Y$) as a function of the stimulus intensity frequently occurring as a sigmoidal function, but one could also wish to quantify for instance the participants' ability to discriminate correct from incorrect responses. The latter would be an expression of metacognitive ability, meaning a person's ability to reflect on and evaluate decisions and performances \citep[e.g.][]{metcalfe_metacognition:_1996}, metacognitive deficiency being a characteristic of diagnoses such as schizophrenia, Alzheimer's disease and certain injuries to the brain \citep{david_failures_2012}. We return to such an experiment later. 

The type of models that are the main topic of the present paper relies on latent variables $X^{\ast}$ and $Y^{\ast}$ and we will generally specify a model with two components: A rule that links the behaviour of the latent variables to the observed (the threshold model), and a bivariate distribution for the latent variables (the distributional model). We will throughout write $F$ and $G$ for the cumulative distribution functions for $X^{\ast}$ and $Y^{\ast}$, respectively, and $H$ for the joint distribution of $(X^{\ast}, Y^{\ast})$. 

Before introducing the main model of the present paper, we first review and delimit methods for constructing bivariate logistic models.

\subsection{Bivariate logistic models}
\label{sec:bivar-logist-models}

An initial, natural model for $(X,Y)$ may be obtained by considering first the marginal models separately. Marginally, one might be interested in applying a logistic regression model for $X$ and a proportional odds model for $Y$. Thus, we might pose the following two regression equations given covariates $Z_1$ and $Z_2$,
\begin{equation}
  \label{eq:4}
  \begin{aligned}
    &\logit \proba{X_i = 1} = \theta + Z_{1,i} \beta_x, \\
    &\logit \proba{Y_i \leq k} = \tau_k - Z_{2,i} \beta_y .
  \end{aligned}
\end{equation}
Note that the sign for $\beta_y$ is chosen so that its interpretation coincides with that in standard logistic regression for the case where $Y$ is binary ($K=2$). An important part of the model for $Y$ is the so-called proportional odds assumption \citep{mccullagh_regression_1980-1}, which assumes that the covariates do not interact with the scale level of the ordinal variable, or that the effect of the covariates affecting the distribution of the latent variable for $Y$ is independent of $k$. This implies that the marginal log-odds ratios $\beta_y$ are independent of $k$. Both models in (\ref{eq:4}) can be written as a latent variable model. Defining threshold models $\setdef{X=1} = \setdef{X^{\ast} > \theta}$ and $\setdef{Y \leq k} = \setdef{Y^{\ast} \leq \tau_k}$ for $K$ threshold parameters $\theta$ and $\tau_1 < \ldots < \tau_{K-1}$, we may stack the latent observations to obtain the following distributional model,
\begin{equation}
  \label{eq:2}
  \begin{pmatrix}
    X^{\ast}_{i} \\
    Y^{\ast}_{i}
  \end{pmatrix} =
  \begin{pmatrix}
    Z_{1,i} & 0 \\
    0 & Z_{2,i}
  \end{pmatrix}
  \begin{pmatrix}
    \beta_x \\
    \beta_y
  \end{pmatrix} +
  \begin{pmatrix}
    \epsilon_{x,i} \\
    \epsilon_{y,i}
  \end{pmatrix} ,
\end{equation}
where $\epsilon_{x,i}$ is independent of $\epsilon_{y,i}$ both having standard logistic distributions. It is simple to check that the two marginal models defined by (\ref{eq:2}) are equivalent to those in (\ref{eq:4}).
The obvious generalization of (\ref{eq:2}) is to model $\epsilon_x$ and $\epsilon_y$ as being dependent, as we will do in the present paper. We will return to expansions including random effects for the longitudinal setting in the discussion. The choice of a bivariate model for $(\epsilon_x, \epsilon_y)$ is, however, not straightforward as various bivariate logistic distributions exist.

A classic paper, \cite{gumbel_bivariate_1961}, introduces a bivariate logistic distribution where the joint cumulative distribution function $H$ is given by,
\begin{equation}
  \label{eq:5}
  (u, v) \mapsto \left(1 + e^{-u} + e^{-v}\right)^{-1} , \quad
  u, v \in \mathbb{R} ,
\end{equation}
often called the Gumbel type 1 distribution. The Gumbel type 1 distribution was extended by \cite{satterthwaite_generalisation_1978} but to a model with generalised logistic rather than logistic marginals. \citeauthor{gumbel_bivariate_1961} also proposed a model $H$ with an association parameter $\omega \in [-1 , 1]$, a distribution of the Farlie-Gumbel-Morgernstern type called the Gumbel type 2 distribution,
\begin{equation}
  \label{eq:6}
  (u,v) \mapsto F(u) G(v) \left[ 1 + \omega (1-F(u))(1-G(v)) \right] , \quad u,v \in \mathbb{R} .
\end{equation}
\cite{murtaugh_bivariate_1990} studied a latent variable, bivariate logistic model using the type 2 distribution in (\ref{eq:6}), a model investigated further in \cite{heise_optimal_1996}. Their motivation was efficiency-toxicity studies such as those described above and their approach focused on the benefit of the bivariate model with regard to statistical efficiency and for estimating the joint probability of non-toxic effective treatment. 
\cite{malik_multivariate_1973} extended the Gumbel type 2 distribution to higher dimensions, a model used by \cite{li_two-dimensional_2011} for a multivariate logistic model for multiple outcomes in which they also studied association measures.

\subsubsection{Measures of association}
\label{sec:measures-association}

When discussing measures of the association between $X$ and $Y$, it may be practical to discern between two classes of methods for constructing bivariate logistic models. In one, the marginals are specified together with a predetermined association measure to yield the bivariate probabilities. In the other, the bivariate distribution is specified which then leads to the association measure. The former may be said to stem from an assumption of no effect modification on a specific scale, while the latter may be motivated by some property of the bivariate distribution.  
A prevalent example of the former arises from the so-called Plackett construction. \cite{plackett_class_1965} considers the problem of constructing a (not necessarily discrete) bivariate distribution by considering an association measure $\psi$ given by,
\begin{equation}
  \label{eq:1}
  \psi (u, v) = \frac{
    H(u,v) \left(1 - F(u) - G(v) + H(u,v)\right)
  }{
    \left(F(u) - H(u,v)\right) \left(G(v) - H(u,v)\right)
  } .
\end{equation}
When $\psi$ is a known function and for a given pair of marginals $F$ and $G$, (\ref{eq:1}) may be solved as a quadratic equation in $H$ and thus may by seen as a defining equation for a bivariate distribution \citep{mardia_contributions_1967}. Under the particular assumption that $\psi$ is constant, $\psi(u, v) \equiv \psi > 0$, the approach defines a two-dimensional model. The resulting bivariate distribution $H$ is often called the Plackett distribution or, when applied as latent variables for discrete data, the constant global odds ratio model \citep{molenberghs_models_2005}, which has been used to model multivariate ordinal data \citep{molenberghs_marginal_1994, forcina_regression_2008}. Note that when this distribution is subsequently discretised at $(u, v)$ and used as latent variables for a bivariate discrete distribution, $\psi$ is the odds ratio in this distribution which may be the motivation for \cite{mardia_contributions_1967} to refer to the distribution as ``contingency-type''. \cite{lipsitz_maximum_1990} considers models for bivariate dichotome variables arising from the specification of marginals along with an association measure, i.e. the correlation (leading to the so-called Bahadur model), the odds ratio (the Plackett model) as well as the relative risk.

In the bivariate logistic regression model that uses the Plackett distribution for the latent variables, it is not only the marginal odds ratios for $Y$ that are independent of the scale level $k$ but the odds ratio $\psi$ which describes the association between $X$ and $Y$ will also not depend on $k$. This is a consequence of the fact that $\psi$ in this case does not depend on the marginal models meaning also that the effect of $X$ on $Y$ is not modified by any covariates from the marginal distributions. However, it has been proposed to introduce covariates in a model for $\log \psi$ \citep{molenberghs_marginal_1994, forcina_regression_2008}. This property is in contrast to models defined from a bivariate latent distribution such as the Gumbel type 2 distribution in (\ref{eq:6}). For such a model, $\psi$ will usually depend on both marginal models. Thus, every covariate from the marginals may be viewed as an effect modifier for the relationship between $X$ and $Y$ and additionally, the association measure will usually depend on $k$.

\bigskip \noindent
The remainder of the paper is structured as follows. We begin by introducing the primary model of the paper, that based on the Ali-Mikhail-Haq (AMH) construction \citep{ali_class_1978}, and study basic properties and the likelihood function of the model. We further study some theoretical properties of association measures in the AMH bivariate logistic model. We illustrate these ideas by applying them to a classic data set on Norwegian hiking patterns and to a cognitive experiment of the form described above. Finally, we discuss methods to include random effects in the proposed model.

\section{The AMH bivariate logistic model}
\label{sec:amh-bivar-logis}

Below, we introduce the AMH bivariate logistic distribution and the bivariate logistic regression model arising from using the AMH distribution as latent variables. We motivate this choice of model in two ways: As an extension of the bivariate logistic regression using Gumbel type 2 latent variables as studied by other authors \citep{murtaugh_bivariate_1990, heise_optimal_1996, li_two-dimensional_2011} and as a ``natural'' two dimensional logistic distribution preserving a salient property of the one-dimensional logistic distribution.

\cite{ali_class_1978} propose to generalise the univariate logistic distribution by searching for bivariate cumulative distribution functions given by
\begin{equation}
  \label{eq:7}
  H(u, v) = \frac{1}{1 + A(u, v)} ,
\end{equation}
where $A$ is called a bivariate odds function extending the univariate notion of an odds. Let $A_x$ and $A_y$ be the respective marginal odds functions, i.e. $A_x(u) = \proba{X > u}$. Motivated by the definition of a class of distributions containing both the case of independence, $A=A_x + A_y + A_x A_y$, as well as Gumbel's type 1 distribution in (\ref{eq:5}) corresponding to $A=A_x + A_y$, the authors propose to search for an odds function satisfying the differential equation,
\begin{equation}
  \label{eq:8}
  \frac{\partial^2}{\partial A_x \partial A_y} A = 1-\omega ,
\end{equation}
for $\omega \in [-1,1]$. They argue that the unique solution (such that $H$ is a bivariate distribution) is $A=A_x + A_y + (1-\omega) A_x A_y$, which particularly by choosing logistic odds functions $A_x(u)=e^{- u}=A_y(u)$ leads to the AMH bivariate logistic distributions,
\begin{equation}
  \label{eq:9}
  H(u, v) = \frac{1}{1 + e^{-u} + e^{-v} + (1-\omega) e^{-u-v}} .
\end{equation}
We denote by $\Lambda_2^{\ast}(\omega)$ this class of distributions and define $\Lambda_2^{\ast} (\mu, \nu; \omega)$ to be the AMH distribution with locations $\mu$, $\nu$, i.e. having distribution function $H(u-\mu, v-\nu)$. Contour plots of the density $\partial^2 H(u, v) / \partial u \partial v$ are shown in Figure \ref{fig:contours}. Note that the marginals are both standard logistic with $F=G$ not depending on $\omega$, the marginal distributions are independent when $\omega=0$ and the case $\omega=1$ corresponds to Gumbel's type 1 distribution in (\ref{eq:5}).

\begin{figure}[pht]
  \centering
  \makebox[\textwidth][c]{\includegraphics[width=1.35\linewidth]{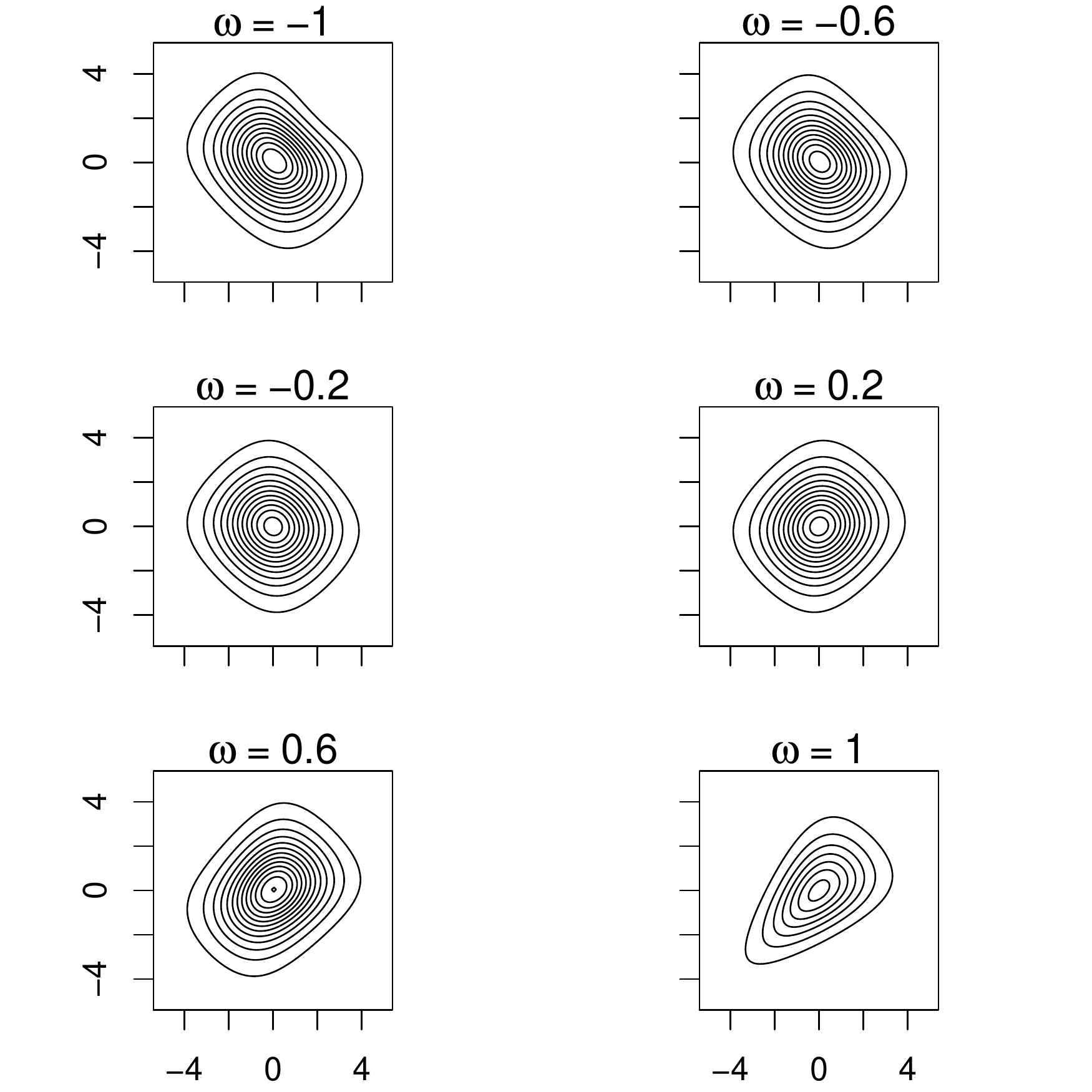}}
  \caption{Contour plots of the AMH distribution $\Lambda_2^{\ast}(\omega)$ with logistic marginals in (\ref{eq:9}) for varying values of the association parameter $\omega$.}
  \label{fig:contours}
\end{figure}

A connection between the AMH-distribution and Gumbel's type 2 distribution in (\ref{eq:6}) can be illustrated by a series expansion of $H$,
\begin{equation}
  \label{eq:18}
  H(u, v) =
  F(u) G(v)
  \sum_{n=0}^{\infty} \omega^n \left[1 - F(u)\right]^n \left[1 - G(v)\right]^n ,
\end{equation}
which shows that we may interpret the Gumbel type 2 distribution as a first order approximation to the AMH distribution \citep[][Chapter 3, ex. 3.40]{nelsen_introduction_2006}.

The following result provides a more probabilistic motivation (also see the discussion by Arnold in \cite{balakrishnan_handbook_1991}, Chapter 11). Recall that the logistic distribution is closed under geometric maximisation (and minimisation) in the sense that if $\setdef{u_i}$ is an i.i.d. sequence of standard logistic variables and $M \sim \text{geom}(\pi)$ is an independent geometric variable, then $\max_{i \leq M} u_i$ is again logistic with location $-\log(\pi)$. The following proposition shows that the AMH distribution possesses a two-dimensional version of this property for positive $\omega$.

\begin{prop}[AMH as a geometric mixture]\label{sec:prop-amh-mixture}
  Let $\omega \in (0,1)$ and consider the distribution $H$ in (\ref{eq:9}). Define $z_1 = u - \log(1-\omega)$ and $z_2 = v - \log(1-\omega)$. Then,
  \begin{equation}
    \label{eq:19}
    H(u, v) = \expectunder{M \sim \text{geom} (1-\omega)}{F^M (z_1) F^M (z_2)} .
  \end{equation}
\end{prop}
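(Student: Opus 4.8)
The plan is to evaluate the right-hand side of (\ref{eq:19}) directly, treating the expectation as a power series in the product $F(z_1) F(z_2)$ and summing it into the closed form (\ref{eq:9}). The key observation is that, conditionally on $M = m$, the integrand $F^m(z_1) F^m(z_2)$ is a pure geometric term, so averaging over $M$ produces an elementary geometric series rather than anything requiring delicate analysis.

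First I would pin down the convention for the geometric law, taking $\proba{M = m} = \omega^{m-1}(1-\omega)$ for $m \geq 1$, consistent with the one-dimensional closure property recalled just above the statement. A quick check sending $v \to \infty$ (so that $F(z_2) \to 1$) should recover the standard logistic marginal in $u$, which both fixes the convention and serves as a sanity check on the shift $z_1 = u - \log(1-\omega)$. Writing $a = F(z_1) F(z_2)$, the central computation is
\begin{equation*}
  \expectunder{M \sim \text{geom}(1-\omega)}{a^M}
  = (1-\omega) \sum_{m=1}^{\infty} a^m \omega^{m-1}
  = \frac{(1-\omega)\, a}{1 - \omega a} ,
\end{equation*}
where the series converges since $a \in (0,1)$ and $\omega \in (0,1)$ force $\omega a < 1$.

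Next I would substitute the explicit logistic form $F(u) = (1+e^{-u})^{-1}$. Because the shift $z_1 = u - \log(1-\omega)$ enters multiplicatively through the exponential, one obtains $F(z_1) = [1 + (1-\omega)e^{-u}]^{-1}$ and similarly for $F(z_2)$, so that $a^{-1} = [1 + (1-\omega)e^{-u}][1 + (1-\omega)e^{-v}]$ and the summed expression equals $(1-\omega)/(a^{-1} - \omega)$. The final step is purely algebraic: expanding $a^{-1} - \omega$ and extracting the common factor gives
\begin{equation*}
  a^{-1} - \omega = (1-\omega)\bigl[\, 1 + e^{-u} + e^{-v} + (1-\omega)e^{-u-v} \,\bigr] ,
\end{equation*}
whereupon the prefactor $(1-\omega)$ cancels and the whole expression collapses to $H(u,v)$ as in (\ref{eq:9}).

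I do not expect a genuine obstacle. The only points needing care are the bookkeeping of the geometric convention and verifying that the $(1-\omega)$ generated by summing the series is exactly the factor required to clear the denominator after substitution; everything else is routine rearrangement.
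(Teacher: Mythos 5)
Your proposal is correct and is essentially the paper's own argument run in reverse: the paper rewrites $H$ into the probability generating function of a geometric variable evaluated at $F(z_1)F(z_2)$ and then expands the geometric series, while you sum the same series starting from the expectation and simplify back to $H$ via the identical factorization $a^{-1}-\omega=(1-\omega)\bigl[1+e^{-u}+e^{-v}+(1-\omega)e^{-u-v}\bigr]$. The geometric convention $\proba{M=m}=\omega^{m-1}(1-\omega)$, $m\geq 1$, matches the one used in the appendix, so nothing is missing.
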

\begin{proof}
  $H$ is easily rewritten to resemble the probability generating function of a geometric variable evaluated at $F (z_1) F (z_2)$, which is exactly the right hand side of (\ref{eq:19}), see Appendix \ref{sec:proof-proposition} for details.
\end{proof}

We may thus represent the AMH distribution as follows. Draw three independent replicates $M_x$, $M_y$ and $M$ from the distribution $\text{geom} (1-\omega)$, and let $\setdef{u_i}$ and $\setdef{v_j}$ be two i.i.d. sequences of logistic variables representing the marginals. We draw  independent samples of stochastic sizes $M_x$ and $M_y$ from the two marginal populations and repeat this $M$ times to obtain the matrices $\setdef{u_{ij}}_{i=1, \ldots, M, j=1, \ldots, M_x}$ and $\setdef{v_{ij}}_{i=1, \ldots, M, j=1, \ldots, M_y}$. Now if we set,
\begin{equation}
  \label{eq:28}
  X^{\ast} = \max_{i=1, \ldots, M} \left( \min_{j=1, \ldots, M_x} u_{ij} \right) ,
  \quad \text{ and } \quad
  Y^{\ast} = \max_{i=1, \ldots, M} \left( \min_{j=1, \ldots, M_y} v_{ij} \right) ,
\end{equation}
then it follows from the proposition that $(X^{\ast}, Y^{\ast})$ follows a AMH distribution with association parameter $\omega$, since $\min_{j=1, \ldots, M_x} u_{j}$ and $\min_{j=1, \ldots, M_y} v_{j}$ have respective distribution functions $F(z_1)$ and $F(z_2)$. Intuitively, the dependence comes from the fact that the maximum is an increasing function: If we draw a small $M$, both of the maxima are based on a small number of repetitions and we would expect both to be smaller than if we had drawn a large $M$. Equivalently, a small/large observation of $X^{\ast}$ would lead us to suspect that $M$ is small/large, and we would consequently expect a small/large observation of $Y^{\ast}$ -- thus the two are positively correlated.

We will use the AMH bivariate logistic distribution $\Lambda_2^{\ast}(\omega)$ as latent variables for two-dimensional observations $(X,Y)$ where $X$ is Bernoulli and $Y$ is ordinal on the scale $\setdef{1,\ldots,K}$. Let $X^{\ast}$ and $Y^{\ast}$ be two latent variables with joint distribution $\Lambda_2^{\ast} (\omega)$ and assume the following threshold model,
\begin{equation}
  \label{eq:27}
  \begin{aligned}
    &\setdef{X = 1} = \setdef{X^{\ast} > \theta} \\
    &\setdef{Y \leq k} = \setdef{Y^{\ast} \leq \tau_k} , \quad k=1, \ldots, K-1
  \end{aligned}
\end{equation}
where $(\theta, \tau_1, \ldots, \tau_{K-1})^T$ is a set of threshold parameters with \mbox{$\tau_1 < \ldots < \tau_{K-1}$}. Note in particular that we have assumed that the covariate effects on the distribution of $Y^{\ast}$ do not depend on $k$ implying a proportional odds assumption in the sense discussed in the introduction
. We refer to the resulting discrete distribution as the bivariate logistic regression model with AMH latent variables and denote this $\Lambda_2 (\omega)$. When there are covariates these are included in the location of the latent variables and we denote the resulting model $\Lambda_2 (Z_1 \beta_x, Z_2 \beta_y ; \omega)$, having the regression models in (\ref{eq:4}) as marginals. The probability function for this discrete distribution is shown in Figure \ref{fig:observ-hist}.

\begin{figure}[pht]
  \centering
  \makebox[\textwidth][c]{\includegraphics[width=1.35\linewidth]{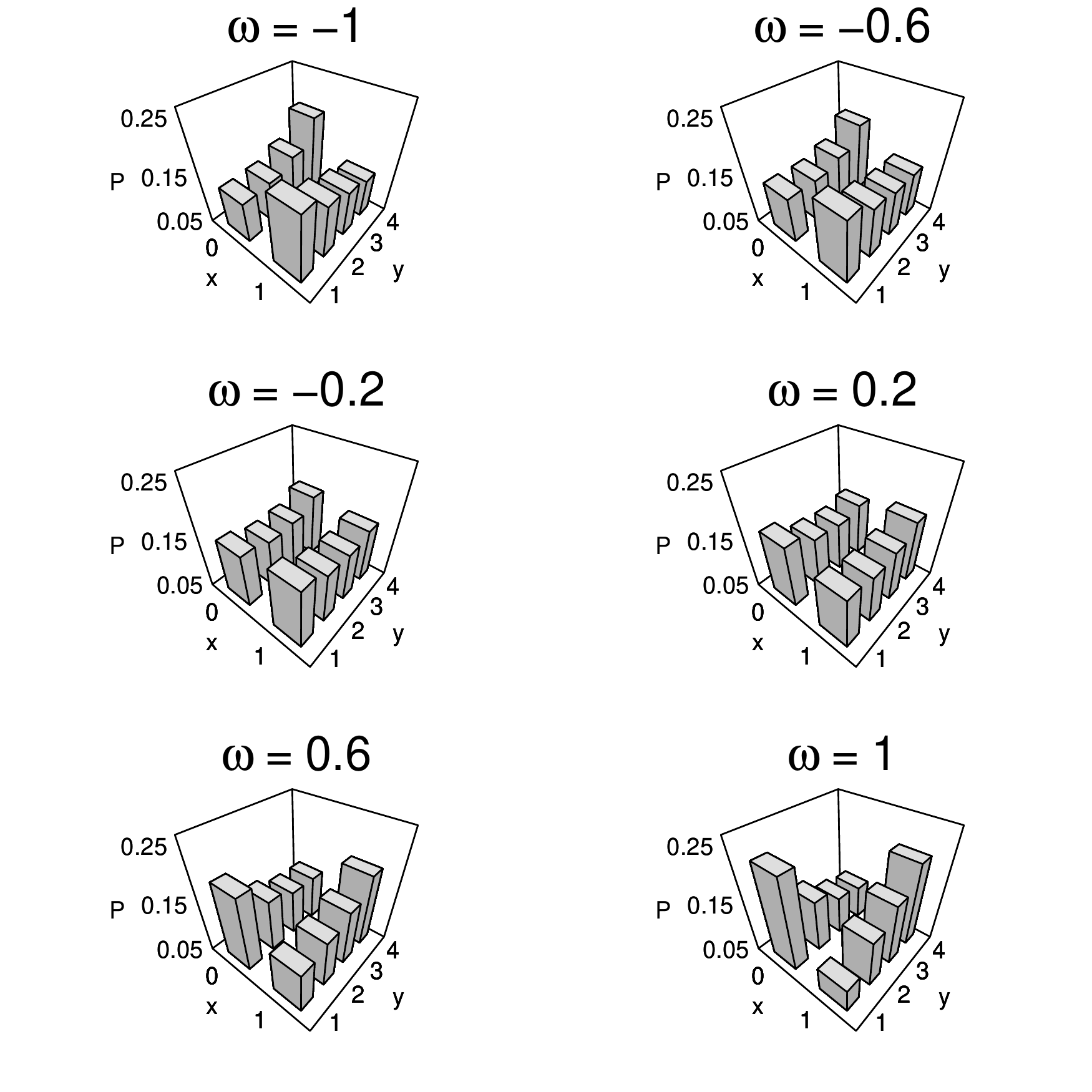}}
  \caption{Histograms for the observed distribution $\Lambda_2 (\omega)$ with $K=4$ for varying values of the association parameter $\omega$. The thresholds were set to $\theta = 0$, $\tau_1 = -1$, $\tau_2 = 0$ and $\tau_3 = 1$.}
  \label{fig:observ-hist}
\end{figure}

\section{Basic properties}
\label{sec:basic-properties}

Below we derive probabilities, moment generating functions and moments for the models $\Lambda_2^{\ast} (\omega)$ and $\Lambda_2 (\omega)$, where we suppress the dependence on covariates as the results may easily be extended by standard results for location families. The first and second moments of the latent and observed variables are summarised in Table \ref{tab:amh-moments}. The covariance for the latent variables is most easily expressed using the series expansion in (\ref{eq:18}), see \cite{mikhail_regression_1987} for some similar results. When $\omega = 1$ we recognise the Basel series so that the covariance is $\pi^2/6$ meaning that the correlation is $1/2$ in concordance with \cite{gumbel_bivariate_1961}. Note that this constitutes the upper bound on the correlation, while the lower bound for $\omega=-1$ is $-1/4$. Thus, the latent variables and the derived model for the observed distribution are more naturally applied to model positively correlated variables. 
The Spearman correlation, which we do no give here, between the latent variables is given in \cite[][Chapter 5, ex. 5.10]{nelsen_introduction_2006} in terms of the dilogarithm, but can also  be shown to admit an appealing representation as a power series in $\omega$.

\begin{sidewaystable}[ph]
  \centering
  \begin{tabularx}{0.75\linewidth}{l | c | c || c | c}
    & $X^{\ast}$ & $Y^{\ast}$ & $X$ & $Y$ \\
    \hline 
    \rule{0pt}{2em}$\expect{\cdot}$ & 0 & 0 &
                                              $\frac{e^{-\theta}}{1 + e^{-\theta}}$ & $\sum_{k=1}^{K} k p_Y(k) =
                                                                                      \sum_{k=1}^{K} k
                                                                                      \frac{
                                                                                      e^{-\tau_{k-1}} - e^{-\tau_{k}}
                                                                                      }{
                                                                                      \left(1 + e^{-\tau_{k-1}}\right) \left(1 + e^{-\tau_{k}}\right)
                                                                                      }$ \\[0.45cm]
    $\vari{\cdot}$ &
                     $\frac{\pi^2}{3}$ &
                                         $\frac{\pi^2}{3}$ &
                                                             $\frac{e^{-\theta}}{\left( 1 + e^{-\theta} \right)^2}$ &
                                                                                                                      $\begin{aligned}
                                                                                                                        &p_{Y}(1) (1-p_{Y}(1))   \\
                                                                                                                        &\quad +\sum_{k=2}^{K}k \sum_{j=1}^{k-1}
                                                                                                                        \frac{k p_Y(k) (1-p_Y(k)) - 2 j (k-1) p_Y(j) p_Y(k)}{k-1}
                                                                                                                      \end{aligned}$ \rule[-3em]{0pt}{0pt}\\
    \hline
    \rule{0pt}{4em}$\covari{\cdot}{\cdot}$ &
                                             \multicolumn{2}{c||}{$\sum_{n=1}^{\infty} \frac{\omega^n}{n^2}$}&
                                                                                                               \multicolumn{2}{c}{$
                                                                                                               \begin{aligned}
                                                                                                                 \sum_{k=1}^{K} k
                                                                                                                 \left\{
                                                                                                                   F(\theta) \left[F(\tau_k) - F(\tau_{k-1})\right] -
                                                                                                                   \left[H(\theta, \tau_k) - H(\theta, \tau_{k-1})\right]
                                                                                                                 \right\} \\
                                                                                                                 \overset{K=2}{=}
                                                                                                                 \frac{
                                                                                                                   \omega \cdot e^{-\theta-\tau_k}
                                                                                                                 }{
                                                                                                                   \left[1 + e^{-\theta} + e^{-\tau_k} + (1-\omega)e^{-\theta-\tau_k} \right]
                                                                                                                   \left[1 + e^{-\theta} \right]
                                                                                                                   \left[1 + e^{-\tau_k} \right]
                                                                                                                 }
                                                                                                               \end{aligned}
    $}
  \end{tabularx}
  \caption{\rule{0pt}{2em}First and second moments of the latent AMH distribution and the observed variables. The covariance between the observed variables is expressed in terms of probability functions for general $K$ and more explicitly in terms of the parameters for $K=2$. Note that $p_Y(y) = \proba{Y=y}$ is the probability mass function for $Y$.}
  \label{tab:amh-moments}
\end{sidewaystable}

The regression of the latent variables onto each other may be derived from the conditional moment generating function of $Y^{\ast}$ given $X^{\ast}$, which is of the form,
\begin{equation}
  \label{eq:13}
  \begin{aligned}
    \mathcal{L}_{Y^{\ast} \: \vert \: X^{\ast}} \left( t \: \vert \: \theta \right)
    &= \cexpect{e^{t Y^{\ast}}}{X^{\ast} = \theta} \\
    &=
    \frac{
      \Gamma(t+2) \Gamma(1-t)
    }{
      2 (1+e^{-\theta})^t (1+(1-\omega) e^{-\theta})^{1-t}
    } \\
    &\quad \cdot
    \left\{
    1+\omega +
    (1-\omega) \left\{
      e^{-\theta} + \left[1 + e^{-\theta} \right] \frac{1-t}{1+t}
    \right\}
    \right\} .
  \end{aligned} 
\end{equation}
By differentiation of the log conditional moment generating function we see in particular that,
\begin{equation}
  \label{eq:15}
  \begin{aligned}
    \cexpect{Y^{\ast}}{X^{\ast} = \theta} &=
    1 + \log \left(
      \frac{1+(1-\omega)e^{-\theta}}{1+e^{-\theta}}
    \right) \\
    &\quad -
    \frac{
      2 (1-\omega) \left[ 1 + e^{-\theta} \right]
    }{
      1+\omega   +
      (1-\omega) \left[
        1 +
        2 e^{-\theta}
      \right]
    } .
  \end{aligned}
\end{equation}

For notational convenience we introduce the indicators $y^k = \indic{y = k}$ and set $\tau_0 = - \infty$ and $\tau_K = \infty$. The probability mass function of the observed distribution is then given by,
\begin{equation}
  \label{eq:24}
  \begin{aligned}
    &p(x, y) = \proba{X=x, Y=y} \\
    &= \prod_{k=1}^{K} \proba{X = 0, Y=k}^{(1- x) y^k} \proba{X = 1, Y=k}^{x y^k} \\
    &=
    \prod_{k=1}^{K} \left[
      \proba{X = 0, Y \leq k} - \proba{X = 0, Y \leq k-1}
    \right]^{(1- x) y^k} \\
    &\quad \cdot
    \left[
      \proba{X = 0, Y \leq k-1} - \proba{Y \leq k-1} -
      \left(
        \proba{X = 0, Y \leq k} - \proba{Y \leq k}
      \right)
    \right]^{x y^k} \\
    &=
    \left[
      H(\theta, \sum_{k=1}^{K} y^k \tau_k) -
      H(\theta, \sum_{k=1}^{K} y^k \tau_{k-1})
    \right]^{(1- x)} \\
    &\quad \cdot
    \Bigg[
      H(\theta, \sum_{k=1}^{K} y^k \tau_{k-1}) -
      F(\sum_{k=1}^{K} y^k \tau_{k-1}) \\
      &\qquad -
      \left(
        H(\theta, \sum_{k=1}^{K} y^k \tau_k) -
        F(\sum_{k=1}^{K} y^k \tau_k)
      \right)
    \Bigg]^{x},
  \end{aligned}
\end{equation}
for $x=0,1$ and $y=1,\ldots, K$.

\section{Estimation}
\label{sec:likelihood-function}

We may perform estimation in the bivariate AMH logistic regression model by maximum likelihood. The model is parametrised by
\begin{equation}
  \label{eq:130}
  \psi = (\theta, \tau_1, \ldots, \tau_{K-1}, (\beta_x)^T, (\beta_y)^T, \zeta(\omega))^T ,
\end{equation}
where $\zeta(s) = \text{tanh}^{-1}(s)$ is the Fisher $\zeta$-transform, having the advantage that $\zeta$ varies unconstrained on the real line. Suppose that we for a subject $i$ observe data $(x_i,y_i)$. Define, as above the indicators $y_i^k = \indic{y_i = k}$ as well as $\tau_0 = - \infty$ and $\tau_K = \infty$. Then using the pmf in (\ref{eq:24}), the log-likelihood is given as,
\begin{equation}
  \label{eq:129}
  \begin{aligned}
    l (\psi) &= \sum_{n=1}^{N} \Bigg\{
    (1-x_i)
    \log
    \left[
      H(\theta-Z_1 \beta_x, \sum_{k=1}^{K} y_i^k \tau_k - Z_2 \beta_y) -
      H(\theta-Z_1 \beta_x, \sum_{k=1}^{K} y_i^k \tau_{k-1} - Z_2 \beta_y)
    \right] \\
    &\quad +
    x_i
    \log
    \Bigg[
    H(\theta-Z_1 \beta_x, \sum_{k=1}^{K} y_i^k \tau_{k-1} - Z_2 \beta_y) -
    F(\sum_{k=1}^{K} y_i^k \tau_{k-1} - Z_2 \beta_y) \\
    &\qquad -
    \left(
      H(\theta-Z_1 \beta_x, \sum_{k=1}^{K} y_i^k \tau_k - Z_2 \beta_y) -
      F(\sum_{k=1}^{K} y_i^k \tau_k - Z_2 \beta_y)
    \right)
    \Bigg]
    \Bigg\}
  \end{aligned}
\end{equation}
Note that $H$ in (\ref{eq:129}) is the AMH bivariate logistic distribution from (\ref{eq:9}) with association parameter $\omega = \text{tanh}(\zeta)$.

Parameter estimates are obtained by maximisation of (\ref{eq:129}), for example using numerical, gradient-based methods under the constraint $\tau_1 < \tau_2 < \ldots < \tau_{K-1}$. Note that this is a linear constraint of $K-2$ parameters.
Starting values were usually successfully chosen to be those obtained from the marginal regression models in (\ref{eq:4}) while starting the association parameter in zero.

The Hessian of the procedure is the negative of a numerical version of the observed information $i(\psi) = -\frac{\partial^2 l (\psi)}{\partial \psi \partial \psi^T}$. As the asymptotic covariance matrix of the estimates is the inverse of the Fisher information $\expect{i(\psi)}$, asymptotic confidence intervals may be based on the numerical observed information, an approach taken in the example sections below.

\section{Association measures}
\label{sec:association-measures}

In Section \ref{sec:correlation-binary-y} we compare the bivariate logistic model arising from the Gumbel type 2 distribution to that from the AMH distribution by comparing the implied product moment correlation as also considered by \cite{heise_optimal_1996}. Below we additionally consider another measure of association between $X$ and $Y$, the odds ratio $\psi$.

\subsection{Correlation for binary $Y$}
\label{sec:correlation-binary-y}

Suppose that $Y \in \setdef{0,1}$. In the following we make a few comparisons between the bivariate logistic regression model obtained from latent variables following an AMH distribution and that obtained from the Gumbel type 2 model. We suppress the dependencies on covariates which may be easily reintroduced by substituting $\theta- Z_1 \beta_x$ for $\theta$ and $\tau - Z_2 \beta_y$ for $\tau$ in the formulas below.

Cf. Table \ref{tab:amh-moments} we may write the correlation as
\begin{equation}
  \label{eq:70}
  \begin{aligned}
    \rho_{\text{AMH}}(\theta,\tau ; \omega) &= \correl{X}{Y} \\
    &=
    \frac{
      \omega
    }{
      \left(
        e^{\theta/2}+e^{-\theta/2}
      \right)
      \left(
        e^{\tau/2} + e^{-\tau/2}
      \right)
      - \omega e^{-\theta/2-\tau/2}
    } .
\end{aligned}
\end{equation}

In \cite{heise_optimal_1996} a binary logistic regression is studied based on the Gumbel type 2 distribution in (\ref{eq:6}). This leads to the following correlation function,
\begin{equation}
  \label{eq:72}
  \rho_{\text{Type 2}}(\theta,\tau ; \omega) =
  \frac{\omega}{ \left( e^{\theta/2} + e^{-\theta/2} \right) \left( e^{\tau/2} + e^{-\tau/2} \right) } .
\end{equation}
The two correlation functions are plotted in Figure \ref{fig:correlations}.

\begin{figure}[htb]
  \centering
  \includegraphics[width=0.9\linewidth]{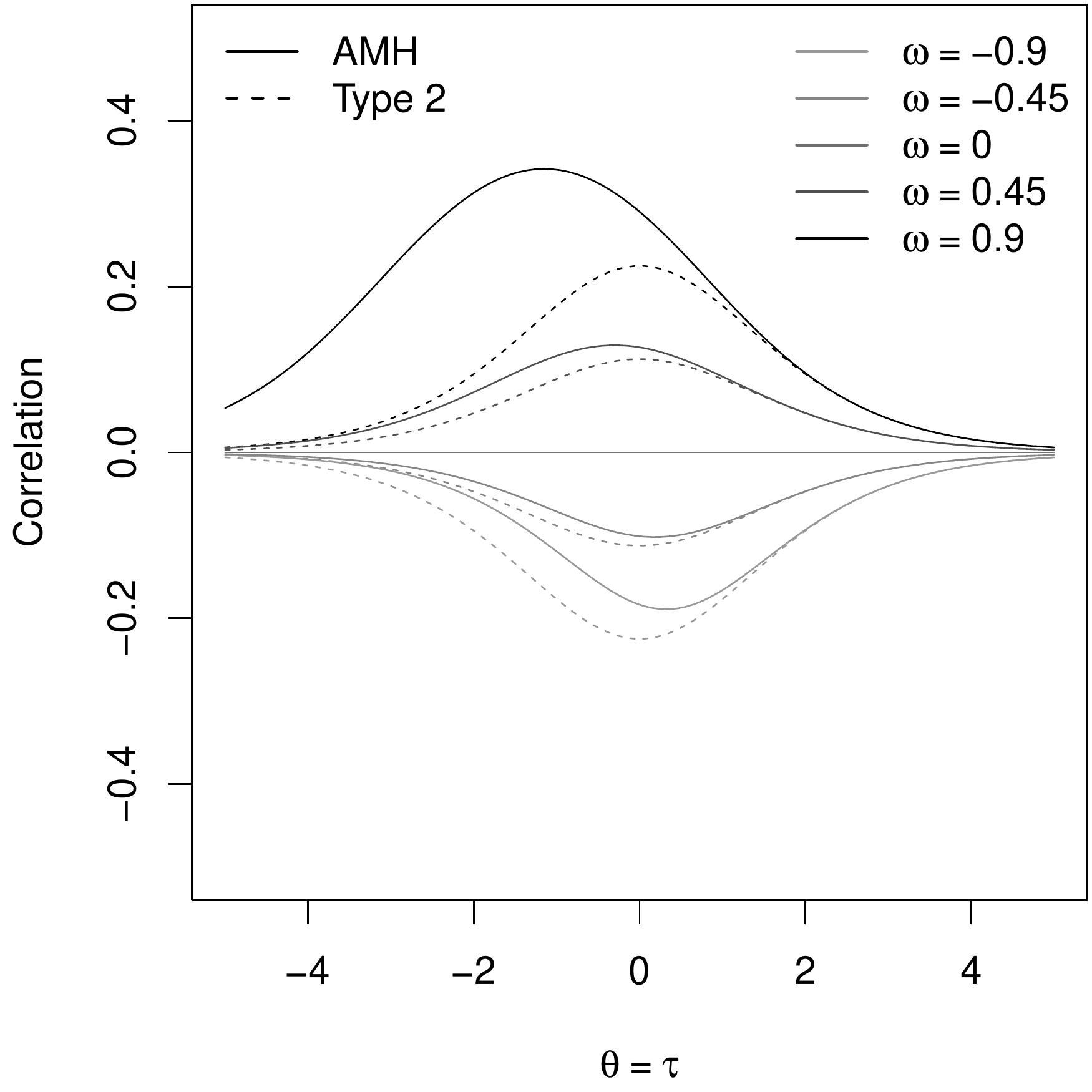}
  \caption{Plot of the correlation functions for the Gumbel type 2 (\ref{eq:72}) and the AMH logistic (\ref{eq:70}) model in the binary case ($K=2$). In both formulas $\theta=\tau$ is varied for $\omega=-0.9,-0.45,0,0.45,0.9$.}
  \label{fig:correlations}
\end{figure}

Viewing the correlations as functions on $\mathbb{R}^2$ for a fixed parameter \mbox{$\omega \in [-1;1]$}, it is simple to verify that,
\begin{equation}
  \label{eq:14}
  \rho_{\text{AMH}}  > \rho_{\text{Type 2}} .
\end{equation}
Moreover, by differentiation of $\rho_{\text{AMH}}$ one may further check that the AMH correlation is convex/concave for $\omega$ negative/positive attaining its global extreme when $\theta = \tau = \frac{1}{2} \log(1-\omega)$. The value of $\rho_{\text{AMH}}$ at this extremum is
\begin{equation}
  \label{eq:25}
      \frac{
      \omega
    }{
      2 
    }
    \cdot
    \frac{1}{1  + \sqrt{1-\omega}} .
\end{equation}
Particularly, we obtain the correlation bounds $- 1/\left(2 \left[1 + \sqrt{2}\right]\right)$ and $1/2$. In comparison, the correlation $\rho_{\text{Type 2}}$ attains its extremum $\omega/4$ when $\theta=\tau=0$, and is thus bounded between $-1/4$ and $1/4$.

It is worth noting here that when dealing with a bivariate distribution with Bernoulli marginals and marginal success probabilities, say, $\pi_1$  and $\pi_2$, respectively, the correlation cannot generally for any given marginals reach the limits $\pm 1$, the specific bounds depending on the marginals. Indeed, if $\pi_{11}$ is the joint success probability this must satisfy the Fr\'{e}chet-Hoeffding bounds in order for the bivariate distribution to be well-defined \citep[][Thm. 2.2.3]{nelsen_introduction_2006}. This implies that,
\begin{equation}
  \label{eq:34}
  \begin{aligned}
    &\max
    \left(
      - \sqrt{
        \frac{
          {\pi_1 \pi_2}
        }{
        \left(1 - \pi_1 \right) \left(1 - \pi_2 \right)
      }
      }
      ,
      -
      \sqrt{
      \frac{
        {\left(1-\pi_1 \right) \left(1-\pi_2 \right)}
      }{
        {\pi_1  \pi_2}
      }
      }
    \right) \\
    &\quad
    \leq
    \frac{\pi_{11} - \pi_1 \pi_2}{\sqrt{\pi_1 \left(1-\pi_1\right) \pi_2 \left(1-\pi_2\right)}} \\
    &\qquad
    \leq
    \min \left(
      \sqrt{
        \frac{
          \pi_1 \left(1 - \pi_2\right)
        }{
          \left(1 - \pi_1 \right) \pi_2
        }
      }
      ,
      \sqrt{
        \frac{
          \left(1 - \pi_1\right) \pi_2
        }{
          \pi_1 \left(1 - \pi_2 \right) 
        }
      }
    \right) ,
  \end{aligned}
\end{equation}
meaning that the correlation cannot attain the limits $\pm 1$ for arbitrary marginals. However, when $\pi_1 = \pi_2$ the upper bound is $1$, and if further $\pi_1 = \pi_2 = 1/2$ the lower bound is $-1$.

\cite{heise_optimal_1996} study a bivariate logistic regression model based on the Gumbel type 2 distribution in (\ref{eq:6}), and propose to let the bounds on the association parameter depend on the marginals, an approach also taken in \cite{li_two-dimensional_2011}. This is based on the observation that it is only necessary for the implied cell probabilities $p(x,y)$ to lie between zero and one for the bivariate binary distribution to be well defined. This will in particular be the case when $H$ is a bivariate distribution owing to the Fr\'{e}chet-Hoeffding bounds, but will also be satisfied for a wider span of association parameters. The same idea could be applied to the model under present study. This method has the benefit that for example the correlation can take on a wider span of values, while downsides include a more complicated parameter space, and that the latent distribution is no longer a proper probability distribution, which may be a hindrance to interpretation in some applications.

\subsection{The odds ratio}
\label{sec:odds-ratio}

The global (or cumulative) odds ratio $\psi_k$ between $X$ and $Y$ at the $k$'th scale level may be seen as the odds ratio in a table which is formed by collapsing counts over the cells $Y \leq k$ and $Y > k$,
\begin{equation}
  \label{eq:10}
  \begin{aligned}
    \psi_k &=
    \frac{
      \cproba{Y > k}{X = 1} / \cproba{Y \leq k}{X = 1}
    }{
      \cproba{Y > k}{X = 0} / \cproba{Y \leq k}{X = 0}
    } \\
    &=
    \frac{
      H(\theta,\tau_k) \left[1 - F(\theta) - F(\tau_k) + H(\theta,\tau_k) \right]
    }{
      \left[F(\theta) - H(\theta,\tau_k) \right] \left[F(\tau_k) - H(\theta,\tau_k) \right]
    } ,
  \end{aligned}
\end{equation}
where $F$ is the logistic function and $H$ is the bivariate distribution from (\ref{eq:9}). Thus, $\psi_k$ is the odds of $Y$ being larger than $k$ given that $X$ is one compared to the odds of $Y$ being larger than $k$ given that $X$ is zero. Note that this has a conditional interpretation given possible covariates. A bit of algebra shows that, now making explicit the dependence on the covariates,
\begin{equation}
  \label{eq:11}
  \psi_k =
  \frac{
    1 - \omega^2 F \left(\theta - Z_1 \beta_x - \log(1-\omega) \right) F \left(\tau_k - Z_2 \beta_y - \log(1-\omega) \right)
    }{
      1-\omega
    } ,
\end{equation}
thus expressing the odds ratio in terms of logistic functions. When $\omega=1$ this should be understood in a limiting sense, in which case $\psi_k = 2 + e^ {-(\theta - Z_1 \beta_x)} + e^ {-(\tau_k - Z_2 \beta_y)}$. We note that $\psi_k = 1$ if $\omega = 0$ and, for a fixed $\omega \not = 0$, the odds ratio $\psi_k$ varies between $1+\omega$ and $1/(1-\omega)$. When $\omega$ is allowed also to vary in $[-1;1]$, the odds ratio will vary in the positive reals. Conversely, if fixing the thresholds $\theta$ and $\tau_k$ as well as the covariates, the odds ratio varies in
\begin{equation}
  \label{eq:33}
  \left[
    \frac{1 - F(\theta - Z_1 \beta_x -\log(2)) F(\tau_k - Z_2 \beta_y - \log(2))}{2}
    ;
    2 + e^{-(\theta - Z_1 \beta_x)} + e^{-(\tau_k - Z_2 \beta_y)}
  \right],
\end{equation}
for $\omega \in [-1;1]$.
Further, (\ref{eq:11}) implies an approximation for the log odds ratio. Linearising the functions $z \mapsto \log(1-z)$ and $z \mapsto e^{z}-1$ we obtain the following approximation,
\begin{equation}
  \label{eq:12}
  \begin{aligned}
    \log \psi_k &\approx \left[\log(1+\omega) - \log \left(\frac{1}{1-\omega} \right) \right]
    F \left(\theta - Z_1 \beta_x - \log(1-\omega) \right)
    F \left(\tau_k - Z_2 \beta_y - \log(1-\omega) \right) \\
    &\quad+
    \log \left(\frac{1}{1-\omega} \right) ,
  \end{aligned}
\end{equation}
with error $o \left(\omega^2 F \left(\theta - Z_1 \beta_x - \log(1-\omega) \right) F \left(\tau_k - Z_2 \beta_y - \log(1-\omega) \right) \right)$, and we recognise a four parameter logistic function when fixing either of the marginals. The dependence of the odds ratio on the covariates in $Z_1$ and $Z_2$ may be interpreted as the covariates modifying the effect of $X$ on $Y$. The approximation in (\ref{eq:12}) shows that the effect modification on the log odds ratio scale essentially has an logistic shape. Further, the effect modification is bounded by $\omega$, the degree of association between $X$ and $Y$, in the sense that it is contained in the interval given by the asymptotes of the four-parameter logistic function in (\ref{eq:12}), $\left[\log(1+\omega) ; \log \left(\frac{1}{1-\omega}\right) \right]$. Thus, the larger the association between $X$ and $Y$ as measured by $\omega$, the larger the possible effect modification of a covariate. The maximal span of the effect modification, or the largest observable odds ratio ratio between any set of covariates, is $\frac{1}{1-\omega^2}$, which may be taken as an alternative interpretation of $\omega$.

\subsection{Modelling the association parameter}
\label{sec:modell-assoc-param}

As follows from the discussion above, the bivariate logistic regression model based on the AMH distribution implies that common association measures are modified by the covariates from the marginal models. For example, on the log odds ratio scale, the effect modification has an approximately logistic shape as follows from (\ref{eq:12}). This may prove too restrictive for some applications in which case it is desirable to model the association parameter. For numerical purposes this may be done on the Fisher $\zeta$-scale, and as there is no straightforward interpretation of the coefficients it is advisable to allow a flexible form of the model, i.e. modelling categorical variables as factors and continuous variables using for example splines.

\section{Example analyses}
\label{sec:example-analys-cogn}

Below we illustrate the bivariate logistic regression model based on the AMH distribution presented above by applying it to two data sets, the first concerned with trekking patterns of Norwegian hikers, while the second is a cognitive experiment in which subjects are asked to perform a visual task and rate their awareness.

The analyses were performed by implementing the model in \textsf{R} (version 3.4.4). An \textsf{R}-package containing the functions along with documentation may be obtained as an archive of source files by contacting the corresponding author.

\subsection{Norwegian trekking data}
\label{sec:norwegian-tour-data}

In \cite{haakenstad_skogbehandling_1972}, the authors survey various aspects of silviculture in Norway including a response to questionnaires delivered by post in Oslo from 365 hikers who were asked about their trekking habits, in particular how often they hike during the summer season and the length of a typical hike. Table \ref{tab:norwegian-tour-data} is a tabulation with slightly coarser groupings than the original data .

\begin{table}[htb]
  \centering
  \begin{tabular}{l|ccccc|r}
    & $<$2.5 & 2.5-5 & 5-10 & 10-20 & $>$20 & Sum \\ 
    \hline
    Rarer &  33 &  45 &  60 &  26 &   6 & 170 \\ 
    Weekly &  14 &  29 &  80 &  56 &  16 & 195 \\ 
    \hline
    Sum &  47 &  74 & 140 &  82 &  22 & 365 
  \end{tabular}
  \caption{Observed frequencies in the Norwegian trekking data. The rows indicate how often a respondent treks during a season, weekly or rarer, while the columns indicate the length of a typical hike in kilometres.}
  \label{tab:norwegian-tour-data}
\end{table}

As only aggregate level data is available with no covariates we consider the following intercept-only model. Denote the hiking frequency by $X \in \setdef{\text{"Rarer''}, \text{"Weekly''}}$ and the length of the hike by $Y \in \{<\text{2.5}, \text{2.5-5}, \text{5-10}, \text{10-20},$ $>\text{20}\}$, an ordinal variable with $K=5$ levels, which we number by $k=1,\ldots,5$. We observe pairs $(x_i, y_i)$ for $i=1,\ldots, 365$. The model for one such observation (dropping the $i$ subscript) is
\begin{equation}
  \label{eq:35}
  \begin{aligned}
    M:
    \begin{cases}
      \setdef{X = \text{"Weekly''}} = \setdef{X^{\ast} > \theta} \\
      \setdef{Y \leq k} = \setdef{Y^{\ast} \leq \tau_k} \\
      (X^{\ast}, Y^{\ast}) \sim \Lambda_2^{\ast} (\omega) .
    \end{cases}
  \end{aligned}
\end{equation}
Estimation is performed by maximizing the likelihood in (\ref{eq:129}) and the resulting estimates are given in Table \ref{tab:tour-data-estimates} with confidence intervals. Note that $\omega$ is estimated on the Fisher $\zeta$-scale and then backtransformed.

\begin{table}[ht]
  \centering
  \begin{tabular}{l rrr}
    & Estimate & 2.5\% & 97.5\% \\
    \hline
    \rule{0pt}{1em}$\hat{\theta}$ & -0.14 & -0.34 & 0.07 \\
    \hline
    $\hat{\tau}_1$ & -1.92 & -2.22 & -1.61 \\ 
    $\hat{\tau}_2$ & -0.71 & -0.92 & -0.49 \\ 
    $\hat{\tau}_3$ & 0.92 & 0.69 & 1.15 \\ 
    $\hat{\tau}_4$ & 2.75 & 2.31 & 3.18 \\
    \hline
    $\hat{\omega}$ & 0.76 & 0.49 & 0.89
  \end{tabular}
  \caption{Parameter estimates for the model $M$ (\ref{eq:35}).}
  \label{tab:tour-data-estimates}
\end{table}

We observe for example that $\hat{\theta} = -0.14$ (\confint{-0.34}{0.07}), indicating that while there is a slight overweight of respondents that trek on a weekly basis, there is not sufficient evidence to reject the hypothesis $\theta = 0$, that there is an equal prevalence of hikers who trek on a weekly basis and those who trek less frequently. The estimate of $\hat{\omega} = 0.76$ (\confint{0.49}{0.89}) indicates a positive association between trek frequency and length, as we return to below.

Predicted counts from the model $M$ were obtained by multiplying the predicted probabilities by the total count $N=365$ and are given in Table \ref{tab:norwegian-tour-predicted}. We observe a good concordance between the counts predicted by the model and those actually observed. 

\begin{table}[ht]
  \centering
  \begin{tabular}{l|ccccc}
    & $<$2.5 & 2.5-5 & 5-10 & 10-20 & $>$20 \\ 
    \hline
    Rarer & 33.59 & 43.30 & 60.30 & 26.38 & 6.26 \\ 
    Weekly & 13.18 & 30.48 & 80.03 & 55.73 & 15.75 \\ 
  \end{tabular}
  \caption{Predicted cell counts from model $M$ (\ref{eq:35}). A goodness of fit statistic comparing the predicted to the observed counts in Table \ref{tab:norwegian-tour-data} yields $\chi^2 = 0.22$.} 
  \label{tab:norwegian-tour-predicted}
\end{table}

As already noted, the estimate of $\omega$ indicated a positive association between hike frequency and length. To further quantify this association, odds ratios $\psi_k$ were calculated for $k=1, 2, 3, 4$. Note that $\psi_k$ is the odds of trekking more than what corresponds to the $k$'th distance category given that one hikes on a weekly basis against the odds of doing so given that one hikes less frequently than once a week. The observed and estimated odds ratios are given in Table \ref{tab:norwegian-tour-OR}. Confidence intervals were calculated on log-scale using the delta-method and then backtransformed.

\begin{table}[ht]
  \centering
  \begin{tabular}{l|r|rrr}
    k & Observed OR & Predicted OR & 2.5\% & 97.5\% \\ 
    \hline
    1 & 3.11 & 3.40 & 1.98 & 5.86 \\ 
    2 & 3.00 & 2.87 & 1.96 & 4.21 \\ 
    3 & 2.52 & 2.43 & 1.85 & 3.19 \\ 
    4 & 2.44 & 2.30 & 1.80 & 2.93 
  \end{tabular}
  \caption{Observed and predicted odds ratios for $k=1,2,3,4$. Confidence intervals were calculated on log scale using the delta-method and then backtransformed.} 
  \label{tab:norwegian-tour-OR}
\end{table}

For example, we estimate that the odds of trekking more than 10 kilometres is $143 \%$ (\confint{85 \%}{219 \%}) higher among those who hike on a weekly basis compared to those who hike less frequently.

Finally, as an example of an association measure of interest related to the latent distribution we estimate the expected wear-and-tear on the hiking trails as measured by $\expect{X^{\ast} Y^{\ast}}$, i.e. how many kilometres are the trails subject to in a season from an average hiker. As it follows from Table \ref{tab:amh-moments}, this quantity may be estimated up to an arbitrary precision as a polynomial in $\omega$ of a certain degree, and an asymptotic confidence interval may be determined from the delta-method. The quantity is given in units of standard deviations of the underlying variables. We estimate the quantity using the first ten terms of the series thus expecting accuracy up to two decimals,
\begin{equation}
  \label{eq:36}
  \begin{aligned}
  \expect{X^{\ast} Y^{\ast}} &\longleftarrow \sum_{n=1}^{10} \frac{\hat{\omega}^n}{n^2} \\
  &=
  0.99, 95 \% \text{CI} = [0.55 ; 1.43] .
\end{aligned}
\end{equation}
More specifically, suppose that one were willing to think of $X^{\ast}$ as the standardised number of hikes in a season which might have standard deviation $\sigma_x$ and that $Y^{\ast}$ were the length of the typical hike with standard deviation $\sigma_y$ in kilometres. To obtain a rough guess for the standard deviations, we compared the estimated thresholds to the mid-points of the intervals (for the tour frequency $X$ this was done marginally in a table with three levels of the frequency). This led us to tentatively take $\sigma_x = 5.4$ number of hikes per season and $\sigma_y = 2.9$ kilometres, and thus we estimate that the average hiker subjects the trial to $15.7$ kilometres (\confint{8.7}{22.7}) in the summer season.

\subsection{A cognitive experiment}
\label{sec:cognitive-experiment}

In a cognitive experiment performed at Aarhus University Hospital, researchers collected data on 20 participants who were asked to perform a series of visual recognition tasks. Specifically, in each trial the subject must report whether the letter presented on a monitor was a lower-case ``h'' or ``b'', the image being presented for a varying duration of 16, 33 or 100 milliseconds, and additionally use a perception awareness scale (PAS) with four levels to rate awareness of the given stimulus information. The experiment is of a so-called inclusion-exclusion type, where the task objective is varied so that a participant in an inclusion trial must report what is seen, while in an exclusion trial reporting the opposite. Each experimental type (in-/exclusion) is investigated over the three settings of stimulus duration in 80 replications per duration for the 20 participants leading to a total of $2 \cdot 3 \cdot 80 \cdot 20 = 9600$ measurements. For the following illustration we will focus on the $4800$ measurements of the inclusion task.

\subsubsection{Inclusion task}
\label{sec:exclusion-task}

Let $(X_{it}, Y_{it})$ be the accuracy and PAS rating of subject $i=1,\ldots, 20$ in the $t=1, \ldots, 240$'th trial. We initially pose the following model, in which, unrealistically, observations on the same individual are assumed to be independent.
\begin{equation}
  \label{eq:37}
  \begin{aligned}
    M_1:
    \begin{cases}
      \setdef{X_{it} = \text{"Correct''}} = \setdef{X_{it}^{\ast} > \theta} \\
      \setdef{Y_{it} \leq k} = \setdef{Y_{it}^{\ast} \leq \tau_k}, \quad k=1, 2, 3 \\
      (X_{it}^{\ast}, Y_{it}^{\ast}) \sim
      \Lambda_2^{\ast} (\beta_x \cdot s_{it}, \beta_y \cdot s_{it}; \omega) ,
    \end{cases}
  \end{aligned}
\end{equation}
where $s_{it}=16, 33, 100,$ is the stimulus duration in milliseconds for the $t$'th trial of subject $i$.

We additionally fitted the following model allowing the association parameter to depend on the stimulus intensity.
\begin{equation}
  \label{eq:26}
  \begin{aligned}
    M_2:
    \begin{cases}
      \setdef{X_{it} = \text{"Correct''}} = \setdef{X_{it}^{\ast} > \theta} \\
      \setdef{Y_{it} \leq k} = \setdef{Y_{it}^{\ast} \leq \tau_k}, \quad k=1, 2, 3 \\
      (X_{it}^{\ast}, Y_{it}^{\ast}) \sim
      \Lambda_2^{\ast} (\beta_x \cdot s_{it}, \beta_y \cdot s_{it} ; \omega_{s_{it}}) .
    \end{cases}
  \end{aligned}
\end{equation}

The parameters in $M_1$ and $M_2$ were estimated by maximum likelihood and the resulting estimates with $95 \%$ confidence intervals are given in Table \ref{tab:estimates-M-M_om}. Estimates of thresholds and $\beta_x$ and $\beta_y$ are very similar between the two models, while there is a clear dependence of the association parameter on the stimulus duration, particularly it is lower at 16 milliseconds compared to intermediate and high duration. 

\begin{table}[htb]
  \centering
  \begin{tabular}{l | ccc | ccc}
    &\multicolumn{3}{c}{$M_1$} & \multicolumn{3}{c}{$M_2$} \\
    & Estimates & 2.5 \% & 97.5 \% & Estimates & 2.5 \% & 97.5 \% \\ 
    \hline
    \rule{0pt}{1.1em}$\hat{\theta}$ & -0.2955 & -0.4280 & -0.1630 & -0.3125 & -0.4440 & -0.1810 \\
    \hline
    $\hat{\tau}_1$ & 0.1176 & 0.0212 & 0.2140 & 0.1127 & 0.0170 & 0.2084 \\ 
    $\hat{\tau}_2$ & 1.9491 & 1.8426 & 2.0556 & 1.9511 & 1.8450 & 2.0572 \\ 
    $\hat{\tau}_3$ & 3.7595 & 3.6024 & 3.9167 & 3.7571 & 3.6006 & 3.9136 \\
    \hline
    \rule{0pt}{1em}$\hat{\beta}_x$ & 0.0337 & 0.0300 & 0.0375 & 0.0335 & 0.0297 & 0.0372 \\ 
    $\hat{\beta}_y$ & 0.0429 & 0.0410 & 0.0449 & 0.0430 & 0.0410 & 0.0449 \\
    \hline
    $\hat{\omega}$ & 0.8846 & 0.8257 & 0.9245 &&&\\
    $\hat{\omega}_{16}$ &&&& 0.6648 & 0.5455 & 0.7577 \\ 
    $\hat{\omega}_{33}$ &&&& 0.9300 & 0.8404 & 0.9701 \\ 
    $\hat{\omega}_{100}$ &&&& 0.9581 & 0.9096 & 0.9808 \\ 
  \end{tabular}
  \caption{Parameter estimates with $95 \%$ confidence intervals for $M_1$ and $M_2$. In $M_2$ the association parameter is allowed to depend on the stimulus intensity $s \in \setdef{16, 33, 100}$.}
  \label{tab:estimates-M-M_om}
\end{table}

To evaluate the association between task accuracy and awareness rating, odds ratios $\psi_k$ were estimated for $k=1, 2, 3$. Note that $\psi_k$ is the odds of rating one's awareness as higher than $k$ given that one has given a correct response against the odds of rating awareness higher than $k$ having given an incorrect response. $\psi_k$ may thus be interpreted as a measure of metacognition as it quantifies the participants' ability to separate correct from incorrect reports. The resulting estimates with $95 \%$ delta method confidence intervals are shown in Figure \ref{fig:obs-pred-OR-mod2_omega} along with the observed odds ratios.

\begin{figure}[htb]
  \centering
  \includegraphics[width=0.99\linewidth]{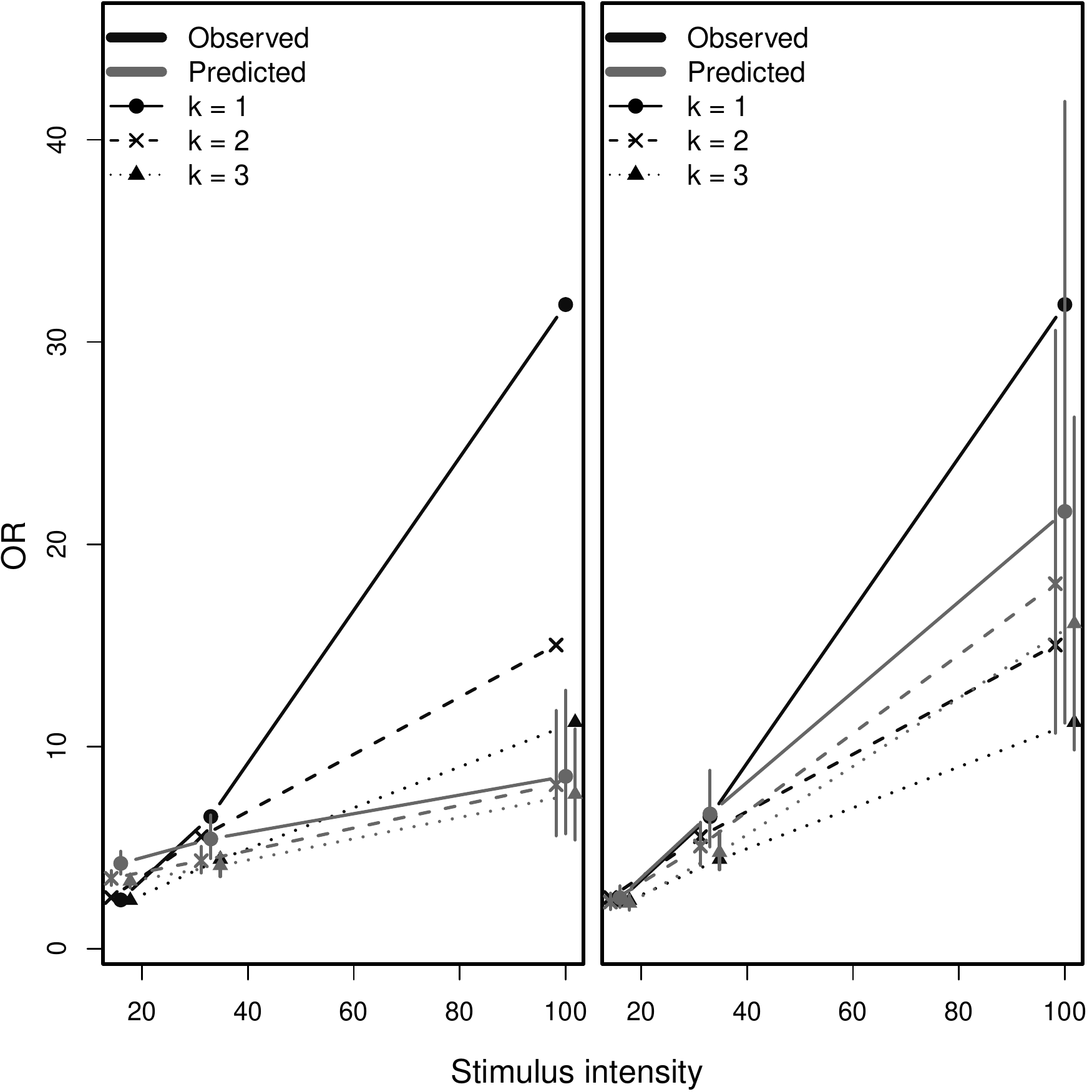}
  \caption{Observed and predicted odds ratios $\psi_k$ from $M_1$ (left) and $M_2$ (right) for $k=1,2,3$. In the left-hand plot there is no covariates in the $\omega$-model, while the right-hand plot depicts the odds ratios under a model with a separate $\omega$ for each stimulus intensity. Values are jittered on the x-axis.}
  \label{fig:obs-pred-OR-mod2_omega}
\end{figure}

For the model $M_1$ in the left-hand side of the figure, we observe that the predicted odds ratios are quite different from those observed, evidently overestimating the association at the lowest stimulus intensity, while underestimating the odds ratios at the high intensity. The odds ratios predicted by model $M_2$ seem to capture the dependence much better by including the stimulus intensity in the association parameter. Note that the wide confidence intervals at $s=100$ especially for $k=1$ is due to a low prevalence of incorrect answers and low prevalence of small PAS ratings at the high stimulus intensity.

As a final model we attempt to accommodate the repeated measurements on the subject level by allowing for a subject-specific level of accuracy and awareness,
\begin{equation}
  \label{eq:29}
  \begin{aligned}
    M_3:
    \begin{cases}
      \setdef{X_{it} = \text{"Correct''}} = \setdef{X_{it}^{\ast} > \theta} \\
      \setdef{Y_{it} \leq k} = \setdef{Y_{it}^{\ast} \leq \tau_k}, \quad k=1, 2, 3 \\
      (X_{it}^{\ast}, Y_{it}^{\ast}) \sim
      \Lambda_2^{\ast} (\beta_x \cdot s_{it} - a_{x,i}, \beta_y \cdot s_{it} - a_{y,i} ; \omega) ,
    \end{cases}
  \end{aligned}
\end{equation}
where we impose the restrictions $\sum_{i=1}^{N} a_{x,i} = \sum_{i=1}^{N} a_{y,i} = 0$ for identifiability. Below we discuss the random effect version of $M_3$, which is usually to be preferred. The sign for the subject-specific level is chosen so that $\theta+a_{x,i}$ and $\tau_k+a_{y,i}$ may be interpreted as subject specific intercepts. We do not give the estimates for $M_3$, but Figure \ref{fig:obs-pred-curves-m3} depicts the observed and predicted mean accuracy and mean awareness over stimulus intensity for four participants in the experiment.

\begin{figure}[htb]
  \centering
  \includegraphics[width = 0.9\linewidth]{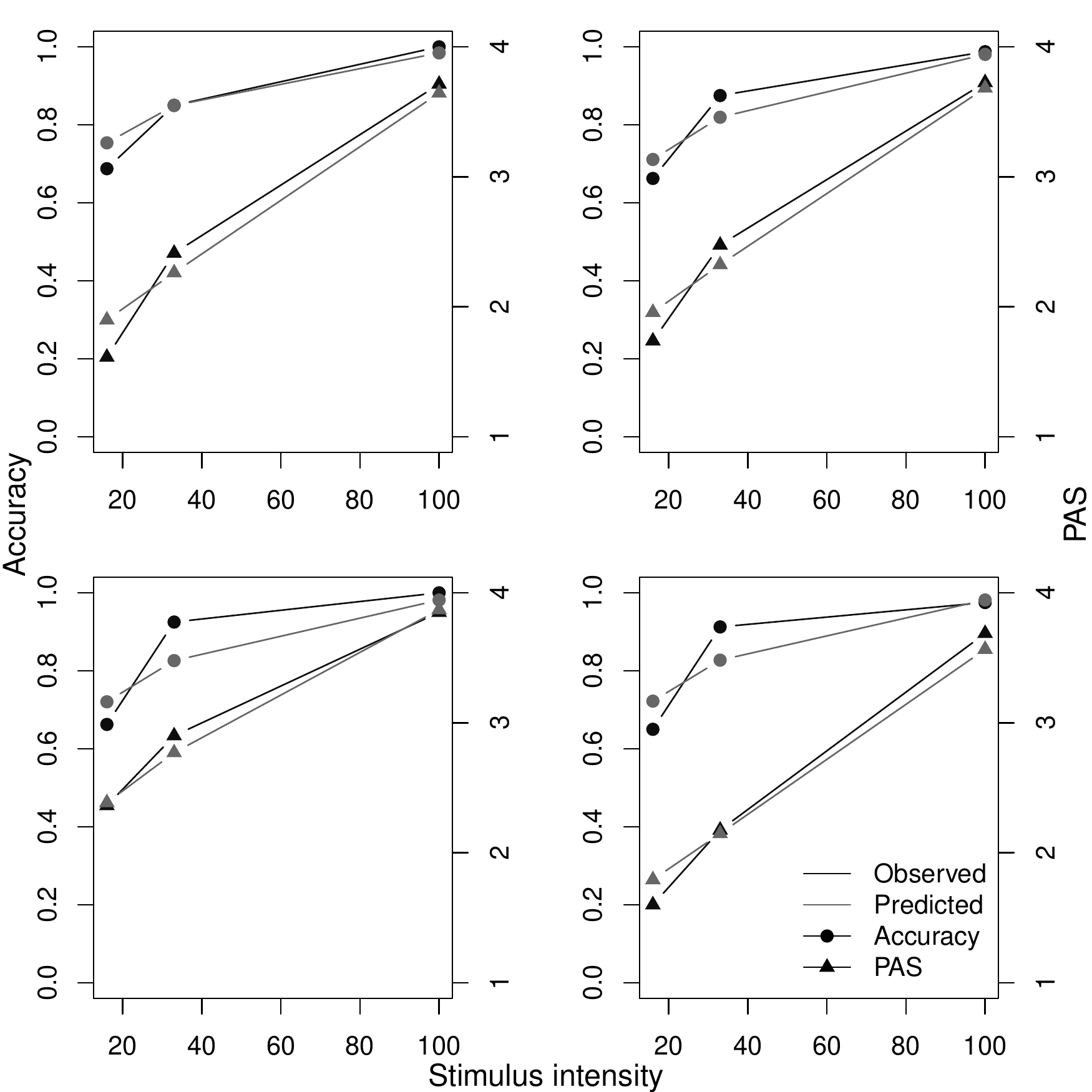}
  \caption{Observed and predicted accuracy and PAS from $M_3$ for four participants.}
  \label{fig:obs-pred-curves-m3}
\end{figure}

The interpretation of latent variables as representing underlying information is widespread in psychological models for cognitive experiments, where it is known as Signal Detection Theory \citep{green_signal_1966}, extensions of which exist also for metacognition \citep[see for example][]{kristensen_regression_forth}. Here we give such an interpretation based on Proposition \ref{sec:prop-amh-mixture}. As follows from (\ref{eq:28}) we may see the latent evidence used for visual identification and rating awareness as arising as a maximin principle applied to multiple underlying evidence processes, i.e. the identification evidence $X^{\ast}$ is an attempt to maximise the minimum information for a number of replicates of a number of processes. The association between the latent evidences $X^{\ast}$ and $Y^{\ast}$ comes from the fact that we maximise over the same number of processes, the expected number being $1/(1-\omega)$. From $M_2$ we estimate that an increase of stimulus intensity from low to intermediate increases the expected number of latent processes by a factor $(1-\hat{\omega}_{16}) / (1-\hat{\omega}_{33})  = 4.8$ (\confint{2.0}{11.7}). Similarly, comparing high stimulus intensity to low we estimate a relative ratio of number of latent processes as $8.0$ (\confint{3.5}{18.5}).

\section{Random effects}
\label{sec:random-effects}

In the longitudinal setting one generally needs to account for the correlation between replicate measurements on a subject and we may introduce random intercepts on a subject level through conditioning in the marginal models. Following the approach from Section \ref{sec:bivar-logist-models} we specify conditional random intercept models for the marginals,
\begin{equation}
  \label{eq:16}
  \begin{aligned}
    &\logit \cproba{X_{it}=1}{\alpha_{x,i}} = \theta + Z_{1, it} \beta_x + \alpha_{x,i},
    \quad \alpha_{x,i} \sim N(0, d_x^2) , \\
    &\logit \cproba{Y_{it} \leq k}{\alpha_{y,i}} = \tau_k - Z_{2,it} \beta_y + \alpha_{y,i},
    \quad \alpha_{y,i} \sim N(0, d_y^2).
  \end{aligned}
\end{equation}
and, again, stacking the latent variables,
\begin{equation}
  \label{eq:17}
  \begin{pmatrix}
    X^{\ast}_{it} \\
    Y^{\ast}_{it}
  \end{pmatrix} =
  \begin{pmatrix}
    Z_{1,it} & 0 \\
    0 & Z_{2,it}
  \end{pmatrix}
  \begin{pmatrix}
    \beta_x \\
    \beta_y
  \end{pmatrix} +
  \begin{pmatrix}
    \alpha_{x,i} \\
    \alpha_{y,i}
  \end{pmatrix} +
  \begin{pmatrix}
    \epsilon_{x,it} \\
    \epsilon_{y,it}
  \end{pmatrix} ,
\end{equation}
for $(\epsilon_x, \epsilon_y)$ independent of $(\alpha_x, \alpha_y)$ and where $\epsilon_x$ is still independent of $\epsilon_y$, both being standard logistic. One idea is then to specify the random intercepts as correlated,
\begin{equation}
  \label{eq:3}
  \begin{pmatrix}
    \alpha_{x,i} \\
    \alpha_{y,i}
  \end{pmatrix} \sim
  N_2 \left(
  \begin{Bmatrix}
    0\\0
  \end{Bmatrix}
  ,
  \begin{Bmatrix}
    d_x^2 & d_{xy} \\
    d_{xy} & d_y^2
  \end{Bmatrix}
  \right) ,
\end{equation}
and the resulting model is often termed a \emph{joint mixed model} or, when $\alpha_x=\alpha_y$, a \emph{shared parameter model} \citep{verbeke_analysis_2014}. The correlation between the random intercepts in (\ref{eq:3}) implies a bivariate model in a marginal sense, i.e. marginally $X$ and $Y$ will be correlated but conditioning on the random effects, $X$ and $Y$ are independent. This may be an appropriate assumption in some scenarios. For example, the shared parameter model is common in rater comparison studies, where two raters both evaluate a number of items. Here it seems reasonable to assume that the two raters' assessments would be independent given the information in an item. To the contrary, in a cognitive experiment such as that analysed above it may not be realistic to assume that accuracy and confidence should be independent given the participant’s overall level of performance and confidence. In such cases, it would be relevant to introduce correlation between $\epsilon_x$ and $\epsilon_y$, which can be done for example by specifying the joint distribution of $(\epsilon_x, \epsilon_y)$ as the AMH distribution thus yielding a random effect version of the model studied in the present paper.

More precisely, we reuse the threshold model from (\ref{eq:27}) and take the joint conditional distribution of the latent variables $X_{it}^{\ast}, Y_{it}^{\ast}$ given $(\alpha_{x,i}, \alpha_{y,i})$ to be $H\left(Z_1 \beta_x - \alpha_{x,i} , Z_2 \beta_y - \alpha_{y,i} \right)$. We denote by $l_{it} (\psi \vert \alpha_{x,i}, \alpha_{y,i})$ the conditional log-likelihood for a single observation $(X_{it}, Y_{it})$ from this model, which is of the form in (\ref{eq:129}) (with $N=1$). Estimation may be based on the marginal likelihood function given by,
\begin{equation}
  \label{eq:30}
  \begin{aligned}
    l \left( \psi, D \right)
    &=
    \sum_{i=1}^{N}
    \log \int_{\mathbb{R}} \int_{\mathbb{R}}
    e^{\sum_{t=1}^{T} l_{it} (\psi \vert u, v)} \phi \left(u,v ; D \right)
    \: du \: dv
  \end{aligned} ,
\end{equation}
where $D$ is the covariance matrix of the random effects in (\ref{eq:3}) and $\phi \left(\cdot, \cdot ; \Sigma \right)$ is the density of a bivariate normal distribution with zero mean vector and covariance matrix $\Sigma$.

The integral in (\ref{eq:30}) does not admit a closed form, but will need to be approximated. Common techniques \citep[see for example][for a review]{tuerlinckx_statistical_2006} include Monte Carlo integration, approximating the integrands for example by the Laplace method or quasi-likelihood as well as the use of deterministic integration methods such as Gauss-Hermite (GH) quadrature, a bivariate version of which we sketch in the following. Denote by $I_i$ the two-dimensional integral from (\ref{eq:30}) and write $L_i(u,v) = \exp({\sum_{t=1}^{T} l_{it} (\psi \vert u, v)})$ for the conditional likelihood as a function of the random effects. Univariate Gauss-Hermite quadrature would for some function $g$ and Hermitian weights  $w(x)=\exp(-x^2)$ approximate the integral $\int g(x) w(x) \: dx$  by the sum $\sum_{i=1}^R g(z_r) w_r$ where the $R$ weights $\setdef{w_r}_{r=1}^R$ and nodes $\setdef{z_r}_{r=1}^R$ may be derived from the Hermitian polynomials, or conveniently obtained from statistical software or tables.

This procedure may be extended to the bivariate case as follows. Supposing that $D$ is positive definite let $D=L L^T$ be its Cholesky decomposition where $L$ is a lower triangular matrix with strictly positive diagonal elements $l_1$ and $l_2$ and off-diagonal entry $l_{12} \in \mathbb{R}$. Note then that the random effect are distributed identically to $L$ times a two-dimensional vector of independent standard normal variables, so that we by transformation of the density obtain,
\begin{equation}
  \label{eq:31}
  \phi(u, v ; D) = \phi \left( \frac{1}{l_1} u \right)
  \phi \left( \frac{1}{l_2} v - \frac{l_{12}}{l_1 l_2} u \right) ,
\end{equation}
where the right-hand side is often termed the Cholesky parametrisation.  Inserting into the expression for the integral $I_i$ and performing substitution yields,
\begin{equation}
  \label{eq:32}
  \begin{aligned}
    I_i
    &=
    \int_{\mathbb{R}} \int_{\mathbb{R}}
    L_i (l_1 u, l_{12} u + l_2 v)
    \phi \left(u \right)
    \phi \left(v \right)
    \: du \:dv \\
    &=
    \frac{1}{\pi}
    \int_{\mathbb{R}} \int_{\mathbb{R}}
    L_i (l_1 \sqrt{2} u, l_{12} \sqrt{2} u + l_2 \sqrt{2} v)
    w \left(u \right)
    w \left(v \right)
    \: du \:dv .
  \end{aligned}
\end{equation}
Thus, we may simply apply the univariate GH quadrature method twice and approximate,
\begin{equation}
  \label{eq:38}
  I_i \approx
  \frac{1}{\pi} \sum_{i=1}^{R} \sum_{j=1}^{R}
  w_i w_j
  L_i (l_1 \sqrt{2} z_i, l_{12} \sqrt{2} z_i + l_2 \sqrt{2} z_j) ,
\end{equation}
for weights and nodes $\setdef{w_r, z_r}_{r=1}^R$ which may be obtained outside of the maximisation procedure as they are parameter independent. We thus replace the $N$ integrals in the log-likelihood in (\ref{eq:30}) by approximations of the form (\ref{eq:38}) and maximise the resulting likelihood as a function of $\psi$ and $(l_1, l_2, l_{12})$ to obtain estimates $\hat{\psi}$ and $\widehat{D} = \widehat{L} \widehat{L}^T$.

In the random effects model, association measures such as the odds ratio $\psi$ will depend on the random effects. In some cases, the subject-specific measure will be uninteresting and one may wish to report the population version of, say, the odds ratio given certain covariates. In this case random effects can be integrated from the probabilities using uni- and bivariate Gauss-Hermite quadrature as described above.

\section{Acknowledgements}
\label{sec:acknowledgements}

The authors would like to thank Dr Kristian Sandberg at the Center of Functionally Integrative Neuroscience (CFIN), Aarhus University and Aarhus University Hospital for allowing us to use the HB data set.

\medskip
 \renewcommand{\bibname}{References}
\small{\bibliography{bibliography}}

\appendix

\chapter{Proofs}
\label{sec:proofs}

\section{Proof of Proposition}
\label{sec:proof-proposition}

The cdf of the AMH distribution is given by
\begin{equation}
\label{eq:20}
  \begin{aligned}
    H(x,y;\omega) &=
    \frac{1}{1 + e^{-x} + e^{-y} + (1-\omega) e^{-x-y}} \\
    &=
    \frac{1-\omega}{1 - \omega + (1-\omega) \left( e^{-x} + e^{-y} + (1-\omega) e^{-x-y} \right)} \\
    &=
    \frac{
      1-\omega
    }{
      \left[1 + e^{-x + \log((1-\omega))} \right] \left[1 + e^{-y  +\log((1-\omega))} \right]  - \omega
    } \\
    &=
    -
    \frac{1-\omega}{\omega} \cdot
    \frac{
      1
    }{
      1 - 1/t
    } \\
    &=
    \frac{1-\omega}{\omega} \cdot
    \frac{
      t
    }{
      1 - t
    } ,
  \end{aligned}
\end{equation}
for $t=F(z_1) F(z_2) \omega$, where $F(z) = (1+e^{-z})^{-1}$ is the logistic function and
\begin{equation}
\label{eq:21}
  z_1 = x - \log(1-\omega), \quad z_2 = y - \log(1-\omega) .
\end{equation}
Thus, since $\absval{t} < 1$,
\begin{equation}
\label{eq:22}
  \begin{aligned}
    H(x,y;\omega) &=
    \frac{1-\omega}{\omega} 
    t
    \sum_{n=0}^{\infty} t^n \\
    &=
    \frac{1-\omega}{\omega} 
    \sum_{n=1}^{\infty} t^n \\
    &=
    \sum_{n=1}^{\infty} \omega^{n-1} (1-\omega) \left( F(z_1) F(z_2) \right)^n \\
    &= 
    \sum_{n=1}^{\infty} (1-\bar{\omega})^{n-1} \bar{\omega} \left( F(z_1) F(z_2) \right)^n ,
  \end{aligned}
\end{equation}
for $\bar{\omega}=1-\omega$. If we further assume $\omega \in (0,1)$, $\bar{\omega}$ may be interpreted as a probability and we obtain the formula,
\begin{equation}
  \label{eq:23}
  H(x,y;\omega) = \mathbb{E}_{N \sim \text{geom} (\bar{\omega})}
  \left[ \left( F(z_1) F(z_2) \right)^N  \right] ,
\end{equation}
which expresses the cdf as a mixture of logistic and geometric variables.

\end{document}